\DeclareMathOperator*{\argmin}{\arg\!\min}
\newcommand{\Prob}{\mathbb{P}}
\newtheorem{thm}{Theorem}
\newtheorem{proof}{Proof}
\begin{document}

\title{On an Exact and Nonparametric Test for the Separability of Two Classes by Means of a Simple Threshold}

\author{FABIAN SCHROEDER\\
	Austrian Institute of Technology, Department of Molecular Diagnostics\\
	fabian.schroeder.fl@ait.ac.at	}

\maketitle

\begin{abstract}
This paper introduces a statistical test inferring whether a variable allows separating two classes by means of a single critical value. Its test statistic is the prediction error of a nonparametric threshold classifier. While this approach is adequate for univariate classification tasks, it is especially advantageous for filter-type variable selection. It constitutes a robust and nonparametric method which may identify important otherwise neglected variables. It can incorporate the operating conditions of the classification task. Last but not least, the exact finite sample distribution of the test statistic under the null hypothesis can be calculated using a fast recursive algorithm.  
\end{abstract}

\section{Introduction}
	
Consider the task of supervised classification with one variable and two classes. In this context, the discrete state or class of an instance is not directly observable, we can, however, observe one or many continuous variables which we expect to contain information about the state. Based on a random sample, in which the class memberships of the instances are known, we would like to derive a classifier and evaluate its success at predicting the class labels. We might also pose the question of the significance of the outcome and calculate its p-value, the probability that the outcome is at least as extreme as observed under the null hypothesis that the continuous random variable exhibits no information about the class. This can be characterized by means of the class-conditional distributions (CCDs) of the random variable. If the CCDs differ from one another, predicting the class membership based on the value of the continuous random variable is superior to random guessing.

The motivation for such a test arises from the need to preselect variables for a classification model on a univariate basis, often referred to as filtering, see, e.g., \citet{GuyonEtAl} for an introduction. This approach usually constitutes the first and crude step in a model selection procedure aimed at reducing the dimension of the sample space. 

Very popular for this purpose are location tests, which infer whether there is a significant difference in the central location of the CCDs. They, however, exhibit a number of shortcomings when used as variable filters, which we shall discuss as a motivation for the proposed test. Consider the t-test \citep{Student, Welch1947}, and their modifications, e.g. the moderated t-test \citep{smyth2004linear}, which are widely used as variable filters. These tests are based on the assumption of Gaussian CCDs. Even small deviations from these assumptions, e.g. a contamination with outliers or skewness, can strongly affect the result. They are not robust and only exact when the parametric assumptions are met. When applied to thousands of different variables this will most likely be problematic.

Nonparametric location tests are inexact in the sense that the exact finite sample distribution of the test statistic under the null hypothesis, also referred to as null distribution (ND), can rarely be obtained. A notable exception is the Wilcoxon Rank-Sum Test \citep{Wilcoxon45}, where exact inference is possible up to a sample size of about 30. Most methods, however, rely on asymptotic results, which is especially critical for small sample sizes and when the p-values of the test statistics used for filtering variables are corrected for multiple testing, see, e.g., \cite{BH}. Small errors in the calculation of the p-values can be amplified by the adjustment. 

The most common nonparametric approach is the permutation test, also known as randomization test \citep{Edgington2011}. It establishes the ND by calculating all possible values of the test statistic under rearrangements of the class labels. However, even for small sample sizes, an exhaustive calculation is computationally infeasible. For example, two different class labels with 30 samples each can be permuted in $ \binom{60}{30} > 10^{17}$ different ways. Thus, one resolves to draw permutations randomly. This approach is problematic because the p-values of the most promising variables will lie on the tail of the ND. Since only a few random permutations will yield values in this region of the ND, the estimation will be most inaccurate. \citet{fewerPerm} suggest to approximate the tail of the ND by curve fitting techniques. Thus, although permutation tests are sometimes referred to as exact tests, in practice exact p-values cannot be obtained. 

Furthermore, all location tests ignore the operating conditions (OCs) of the classification task. The OCs comprise the misclassification costs and the prevalence of the classes in the population, see \citet{CSOverview} for an overview. Think of medical diagnosis where it is the rule rather than the exception that the consequences of misdiagnosing an ill individual are more severe than those of misdiagnosing a healthy individual and that most diseases are relatively rare. Disregarding the OCs might lead to the selection of suboptimal variables.

In this paper, a statistical test is proposed that avoids these shortcomings. Its test statistic is based on the estimation of the prediction error of a classifier and, thus, naturally accounts for different operating conditions. Choosing the classifier is critical and must not be arbitrary. If the CCDs were known, we could simply choose the classifier that minimizes the prediction error. However, this is rarely the case and would furthermore require a different classifier for every variable. We will, thus, opt for a nonparametric approach assuming a simple type of functional form for the classifier, the threshold classifier. Due to its straightforward interpretability, it is very popular for medical applications, e.g. the use of a fever thermometer. If the measured temperature exceeds a certain critical value the individual is classified as ill. It can be motivated by parametric models, e.g. for two Gaussian class-conditional distributions with equal variances the Bayes classifier is of this type. However, since any assumption of a parametric form is problematic for the purpose of filtering, we will, thus, identify the optimal threshold classifier by minimizing the sample prediction error, also known as empirical risk, see \cite{threshold}. This approach follows the paradigm that one should always try to solve a problem directly and never solve a more general problem as an intermediate step, as advocated by \cite{Vapnik}.  
	
The contribution of this paper is threefold. In Section~\ref{sec:estimation} we define the test statistic $\widehat{ETC}$, an unbiased and consistent estimator of the prediction error of the optimal threshold classifier. The independence of the class-conditional distributions under the null hypothesis is established in Section~\ref{sec:ND}. It is, thus, possible to use the same ND for different types of variables since its distribution depends only on the operating conditions and the sample size. The third and main contribution of this paper is the derivation of the exact ND by means of a fast recursive algorithm, which is introduced in Sections~\ref{sec:ND} and~\ref{sec:algorithm}. A few aspects of its implementation are commented on in Section~\ref{sec:implementation}. Section~\ref{sec:SimulationStudies} sheds light on the efficiency and robustness of the method by means of extensive simulation studies. The final conclusions are drawn in Section  ~\ref{sec:Conclusions}.

\section{Preliminaries}\label{sec:Preliminaries}
	
Consider the random variable $(X,Y)$ defined on $\mathcal{X} \times \{0,1\}$, where $X$ is real valued ($ \mathcal{X} \subseteq \mathbb{R}$) and $Y$ represents the class membership. Let the random variable $X$ conditional on $Y$ be denoted $X^0 := X|Y=0$ and $X^1 := X|Y=1$ and their distributions $F^0$ and $F^1$, respectively. A classifier $ \delta: \mathcal{X} \to \{0,1\} $ is a mapping that attaches a class membership to a realization of $X$. In this paper we will consider a simple type of univariate classifier called threshold classifiers, 

\begin{equation*}\label{eq:etc1}\begin{array}{ll}
\delta_{<t}(x) := \mathbbm{1}_{(-\infty, t)}(x),& t \in \mathbb{R}\\
\delta_{\geq t}(x) := \mathbbm{1}_{[t, \infty)} (x),& t \in \mathbb{R}.\\
\end{array}\end{equation*}
The family of classifiers of this type shall be denoted $\mathcal{F}:= \big \{\delta_{<t}(x), t \in \mathbb{R} \big \} \cup \big \{ \delta_{\geq t}(x) , t \in \mathbb{R} \big \}$. The selection of a classifier $\delta$ from this family shall be based on the expected prediction error, a functional defined by
\begin{equation}\label{eq:defEPE} EPE(\delta) := \mathbb{E}_{(X, Y)} [\textrm{L}(\delta(X), Y)], \end{equation}
where $L$ denotes the loss function
\begin{equation*} \textrm{L}\label{eq:Loss}
(\delta(X),Y):= \left \{ \begin{matrix}
c_1, & \delta(X) = 0 \land Y=1 \\
c_0, & \delta(X) = 1 \land Y=0 \\
0, & \delta(X) = Y \\
\end{matrix} \right \}, \end{equation*}
and $c_0 \geq 0$ and $c_1 \geq 0$ denote the misclassification costs of negative ($Y = 0$) and positive ($Y = 1$) instances, respectively. Let us further introduce the notion of an operating condition (OC), consisting of a triple $\theta=(c_0, c_1, \pi_1)$, where $\pi_1$ denotes the share of positives in the population. The share of negatives $\pi_0$ equals $ 1 - \pi_1$. The OCs parametrize the circumstances under wich a classifier is applied, see \citet{FlachEtAl2012}. Applying Bayes' theorem we can rewrite equation \eqref{eq:defEPE} as a function of these operating conditions:
\begin{equation} \label{eq:EPE} \textrm{ EPE}(\delta) = c_0 \pi_0 \underbrace{P[\delta(X)=1|Y=0]}_{\textrm{false positive rate}} + \: c_1 \pi_1 \underbrace{P[\delta(X)=0|Y=1]}_{\textrm{false negative rate}}.\end{equation}
	
Finally, let us define $ETC$, the expected prediction error of the optimal threshold classifer
\begin{equation}\label{eq:thETCdef} ETC := \min_{\delta \in \mathcal{F} } EPE(\delta). \end{equation}
	
This statistic quantifies the degree to which the classes can be distinguished by setting a threshold on the sample space of $X$. If the sample allows for a perfect separation by means of a threshold, the value will be 0. The maximum value equals $\min \left ( c_0 \pi_0, c_1 \pi_1 \right )$ and is obtained when the conditional distributions $F^0$ and $F^1$ are identical.

\section{Sample Estimate for ETC}\label{sec:estimation}
	
Suppose ${(x_i, y_i), i=1, \ldots, n}$, are i.i.d. realizations of $(X,Y)$. Let $n_1 := \sum_{i=1}^n y_i$ and $n_0 := n - n_1$ denote the number of positives and negatives, respectively. Without loss of generality let us assume that $y_i=0$ for $i = 1,\ldots, n_0$ and $y_i = 1$ for $i = n_0 + 1,\ldots, n$. 
	
To derive a sample estimate of $ETC$ we simply need to substitute the false positive and false negative rates in \eqref{eq:EPE} by their empirical counterparts. For the case of threshold classifiers, these are simply the empirical cumulative distribution functions of the class-conditionals, yielding 
	
\begin{equation}\label{eq:EPEofTC}
\begin{array}{l}
\widehat{EPE} (\delta_{< t}) = c_0 \pi_0 \underbrace{ \frac{1}{n_0} \sum_{i: y_i=0} \mathbbm{1}_{(-\infty, t)} (x_i) }_{\textrm{false positive rate}} + \: c_1 \pi_1 \underbrace{  \frac{1}{n_1} \sum_{i:y_i=1} \mathbbm{1}_{[t, \infty)} (x_i) }_{\textrm{false negative rate}} \\
\widehat{EPE} (\delta_{\geq t}) = c_0 \pi_0 \underbrace{ \frac{1}{n_0} \sum_{i: y_i=0} \mathbbm{1}_{[t, \infty)} (x_i)   }_{\textrm{false positive rate}} + \: c_1 \pi_1 \underbrace{\frac {1}{ n_1 } \sum_{i: y_i=1} \mathbbm{1}_{(-\infty, t)} (x_i) }_{\textrm{false negative rate}}. \end{array}
\end{equation}
	
Note that $\widehat{EPE}(\delta_{<t})$ is constant for all thresholds $ t \in [x_{(j)} ,x_{(j+1)})$, where $x_{(j)}$ denotes the $j$-th order statistic of the sample. Thus, instead of having to search $\mathbb{R}$ it suffices to find the minimum of $ n + 1 $ values. Exploiting this fact leads us to the empirical counterpart of \eqref{eq:thETCdef},
	
\begin{equation}\label{eq:ETCdef}
\widehat{ETC} = \min \left \{ \min_{j=1,\ldots,n} \widehat{EPE}(\delta_{<x_{(j)}}) , \min_{j=1,\ldots,n} \widehat{EPE}(\delta_{\geq x_{(j)}}) \right \}.
\end{equation}

Note, that due to the specific form of a threshold classifier its false negative rate $P[\delta_{<t}(X)=1|Y=0]$ is simply $F^0(t)$, the cumulative CCD evaluated at $t$. By the same argument $P[\delta_{<t}(X)=0|Y=1] = 1 - F^1 (t)$ and, thus, $EPE(\delta_{<t})$ is simply a weighted sum of $F^0(t)$ and $F^1(t)$. Since the empirical cumulative distributions in \eqref{eq:EPEofTC} are unbiased and consistent estimators of $F^0(t)$ and $F^1(t)$, so is $\widehat{ETC}$ an unbiased and consistent estimator of $ETC$.

\section{Derivation of the Null Distribution}\label{sec:ND}
	
We might pose the question of the significance of a certain value of $\widehat{ETC}$. In other words, what is the probability of the observed or a more extreme outcome under the null hypothesis that $X$ exhibits no information about the class membership. The null hypothesis and the alternative hypothesis can be stated $H_0 : F^0 = F^1$ and $H_1 : F^0 \neq F^1$, respectively. For the purpose of filtering, providing a p-value gives additional information, since it allows not only to rank the variables but also to decide how many variables should be selected. 
	
The first result that we would like to establish is that the distribution of $\widehat{ETC}$ under the null hypothesis is independent of the CCDs. This important property allows us to calculate the ND once only for any number of tests since it only depends on the operating conditions and the sample size. Let us formulate this in the following theorem.
	
\begin{thm}\label{thm:ind}
	Consider the i.i.d. samples of the class-conditional random variables $X^0_i \sim F^0, i=1, \ldots, n_0$ and $X^1_i \sim F^1, i=n_0+1, \ldots, n$. Under the null hypothesis $H_0: F^0(x) = F^1(x), \forall x \in \mathcal{X}$ the sampling distribution of $ \widehat{ETC} $ is independent of $F^0$ and $F^1$. 
\end{thm}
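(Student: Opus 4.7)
The plan is to show that $\widehat{ETC}$ depends on the sample $(x_1,y_1),\ldots,(x_n,y_n)$ only through the sequence of class labels listed in the order induced by the values $x_i$, and then to argue that under $H_0$ this label sequence is uniformly distributed over arrangements of $n_0$ zeros and $n_1$ ones, irrespective of the common CCD.

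First I would reduce $\widehat{ETC}$ to a rank statistic. Let $\sigma$ be the permutation that sorts the pooled sample in increasing order, and write $\mathbf{y}_\sigma := (y_{\sigma(1)},\ldots,y_{\sigma(n)})$. A direct inspection of \eqref{eq:EPEofTC} shows that at $t = x_{(j)}$ both $\widehat{EPE}(\delta_{<x_{(j)}})$ and $\widehat{EPE}(\delta_{\geq x_{(j)}})$ become weighted counts of zeros and ones in the prefix $y_{\sigma(1)},\ldots,y_{\sigma(j-1)}$ and the suffix $y_{\sigma(j)},\ldots,y_{\sigma(n)}$, with weights depending only on $c_0,c_1,\pi_0,\pi_1,n_0,n_1$. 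Taking the minimum over $j$ as in \eqref{eq:ETCdef} then exhibits $\widehat{ETC}$ as a deterministic function of $\mathbf{y}_\sigma$ alone.

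Next, I would invoke the classical fact that the sorting permutation of an i.i.d.\ sample from any continuous distribution is uniformly distributed on the symmetric group $S_n$ and independent of the order statistics. Under $H_0$ all $n$ observations are i.i.d.\ from a single law $F = F^0 = F^1$, so $\sigma$ is uniform on $S_n$. Since the labels are fixed by the convention $y_i = 0$ for $i \le n_0$ and $y_i = 1$ otherwise, the random vector $\mathbf{y}_\sigma$ is uniformly distributed over the $\binom{n}{n_0}$ arrangements of $n_0$ zeros and $n_1$ ones. This distribution makes no reference to $F$, and composing it with the map constructed in the previous step gives a distribution of $\widehat{ETC}$ that is the same for every admissible $F^0 = F^1$, as claimed.

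The subtle step is the handling of ties: if the common CCD has atoms, the sorting permutation $\sigma$ is no longer uniquely defined and the clean rank argument collapses. I would address this either by restricting the theorem to continuous CCDs (the standard setting for rank-based inference) or by adopting an explicit tie-breaking convention, such as randomization among tied observations, under which $\mathbf{y}_\sigma$ remains uniformly distributed on arrangements of $n_0$ zeros and $n_1$ ones, so that the conclusion still goes through.
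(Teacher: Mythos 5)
Your proposal is correct and follows essentially the same route as the paper: you factor $\widehat{ETC}$ through the sequence of class labels ordered by the $X$-values and then argue that this label arrangement is uniform over the $\binom{n}{n_0}$ possibilities under $H_0$, which is precisely the paper's factorization $\widehat{ETC} = \widehat{ETC}_{\mathcal{P}} \circ r$ combined with its exchangeability argument. Your explicit attention to ties is a point the paper's proof glosses over (deferring it to the implementation section), but it does not change the argument's structure.
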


\begin{proof}
Let us introduce $r$ which maps a sample $(x_i, y_i)_{i=1,\ldots,n}$ to the permutation of class labels ordered by increasing value of $X$
\begin{equation}\label{eq:suffStat} r: (\mathcal{X} \times \{0,1\})^n \mapsto \mathcal{P}_{n_1, n_0} : r ( (x_i, y_i)_{i = 1, \ldots, n} ) = ( y_{(1)} ,y_{(2)} , \ldots,y_{(n)} ), \end{equation}
where $y_{(i)}$ denotes the class label of the $i$-th order statistic, $y_{(i)}:=y_j: x_j=x_{(i)}$. $\mathcal{P}_{n_1, n_0}$ denotes the space of all possible permutations of $n_1$ positive and $n_0$ negative class labels \[ \mathcal{P}_{n_1, n_0} := \left \{ (y_1, \ldots, y_n): y_i \in \{0,1\} \land \sum_{i=1}^{n} y_i = n_1 \right \}.\] The importance of $r$ lies in the fact that although it reduces the information of the sample it still contains all the information needed to calculate $\widehat{ETC}$. This can be shown by the following factorization $ \widehat{ETC} \equiv \widehat{ETC}_{\mathcal{P}} \circ r $.
		
\[ \begin{xy}
\xymatrix{
(\mathcal{X}, \{0,1\})^n \ar[rr]^{\widehat{ETC}} \ar[rd]_r  &     &  \mathbb{R}^+   \\
&  \mathcal{P}_{n_1, n_0}  \ar[ru]_{\widehat{ETC}_{\mathcal{P}}} &
}
\end{xy} \]
		
To prove the validity of this factorization we can rewrite the first argument of \eqref{eq:ETCdef}
		
\begin{align}\nonumber \min_{i=1, \ldots, n} \widehat{EPE}(\delta_{< x_{(i)}}) & = \nonumber \min_{i=1,\ldots,n} c_0 \pi_0 \frac{1}{n_0} \sum_{j=1}^{n_0} \mathbbm{1}_{(-\infty, x_{(i)})} (x_j) + c_1 \pi_1 \frac{1}{n_1} \sum_{j=n_0+1}^{n} \mathbbm{1}_{[x_{(i)}, \infty)} (x_j)\\
\label{eq:factorize} & = \min_{i=1,\ldots,n} c_0 \pi_0 \frac{1}{n_0} \sum_{j=1}^i y_{(j)} + c_1 \pi_1 \frac{1}{n_1} \sum_{j=i+1}^n (1-y_{(j)}), \end{align}
as a function of $ r ( (x_i, y_i)_{i = 1, \ldots, n} ) $. The same holds true for the second argument in \eqref{eq:ETCdef} and, thus, we have the required factorization of $ \widehat{ETC} $. This is similar to the concept of a sufficient statistic and the argumentation is similar to the characterization of sufficiency of \citet{Neyman}.
		
Under the null hypothesis we are effectively drawing independently $n$ times from the same distribution. The i.i.d. condition implies exchangeability and, thus, every order of positives and negatives is equally likely. Irrespective of the distribution of $(X,Y)$ on the sample space, the induced probability distribution of $r((X_i,Y_i)_{i=1,\ldots,n})$ on $\mathcal{P}_{n_1, n_0}$ is always uniform under $H_0$ and, thus, independent of $F_0$ and $F_1$.
\[ \Prob[r((X_i, Y_i)_{i=1,\ldots,n}) = p] = \frac{1} {\binom{n}{n_0}} \quad \forall p \in \mathcal{P}_{n_1, n_0} \]
As a consequence, the probability distribution on $\mathbb{R}^+$ induced by $\widehat{ETC}_{\mathcal{P}}$ must also be independent of $F_0$ and $F_1$ and is identical to the distribution induced by $\widehat{ETC}$. \hfill $\square$
		
\end{proof}
	
A byproduct of this proof is that it allows us to shift the problem of calculating the ND of $\widehat{ETC}$ from the sample space $ (\mathcal{X} \times \{0,1\})^n $ to $\mathcal{P}_{n_1, n_0}$, where we know that the distribution is uniform. Thus, we can calculate the probability of any event by simply counting the number of favorable permutations and dividing by the number of possible permutations ${\binom{n}{n_0}}$. This insight is the basis for the algorithm introduced in Section~\ref{sec:algorithm}, the second key result of this paper.  
	
Let us define the following partition on $\mathcal{P}_{n_1, n_0}$
\begin{equation}\label{eq:partition} \mathcal{P}_{n_1, n_0} = \bigcup_{ \substack{ 0 \geq fn \geq n_1 \\ 0 \geq fp \geq n_0} } S_{fn, fp}, \textrm{ where } \end{equation} 

\begin{equation}\label{eq:set} S_{fn, fp} := \left \{ p \in \mathcal{P}_{n_1, n_0}: \phi(p)=(fn,fp) \right \}, \end{equation}
and $fn = 0,\ldots,n_1$ and $fp = 0,\ldots,n_0$ denote a given number of false negatives and false positives. The function $\phi: \mathcal{P}_{n_1, n_0} \to \{ (fn, fp): fn \in \{ 0, \ldots, n_1 \} \land fp \in \{ 0, \ldots, n_0 \}  \} $ is required since for many permutations the optimal number of false positives and false negatives of $\widehat{ETC}_{\mathcal{P}}$ can be ambiguous. When this is not the case, $\phi$ maps to $(fn, fp)$ given by the optimal threshold of $\widehat{ETC}_{\mathcal{P}}$, thus $ \widehat{ETC}_{\mathcal{P}}(p) = c_0 \pi_0 fp + c_1 \pi_1 fn, \forall p \in S_{fn, fp}$. For the sake of a well defined partition we need to introduce two conventions. For permutations where more than one position of the threshold yields the same minimal prediction error the threshold is set to the smallest index number. Secondly, for permutations where mapping positives to the left or the right of the threshold yields the same $\widehat{EPE}$, assigning the positives to the left is chosen. The exact expression of $\phi$ can be found in Appendix~\ref{sec:factorize}. With the definitions above we can denote the discrete ND under the assumptions made in Theorem~\ref{thm:ind} as 
	
\begin{equation}
\Prob \left[ \widehat{ETC}=c \right] = \sum_{(fn, fp): c_0\pi_0 fp + c_1 \pi_1 fn = c} \frac{|S_{fn,fp}|} {\binom{n}{n_0}}.
\end{equation}
	
An example of a null distribution for $n_0 = 9$, $n_1 = 9$, $c_0 = 1$, $c_1 = 2$, $\pi_1=0.5$ can be found in Figure~\ref{fig:ND}. We will use this example in the following section to explain key features of the algorithm. 
	
\begin{figure}
\centering
\makebox{\includegraphics[width=1\textwidth]{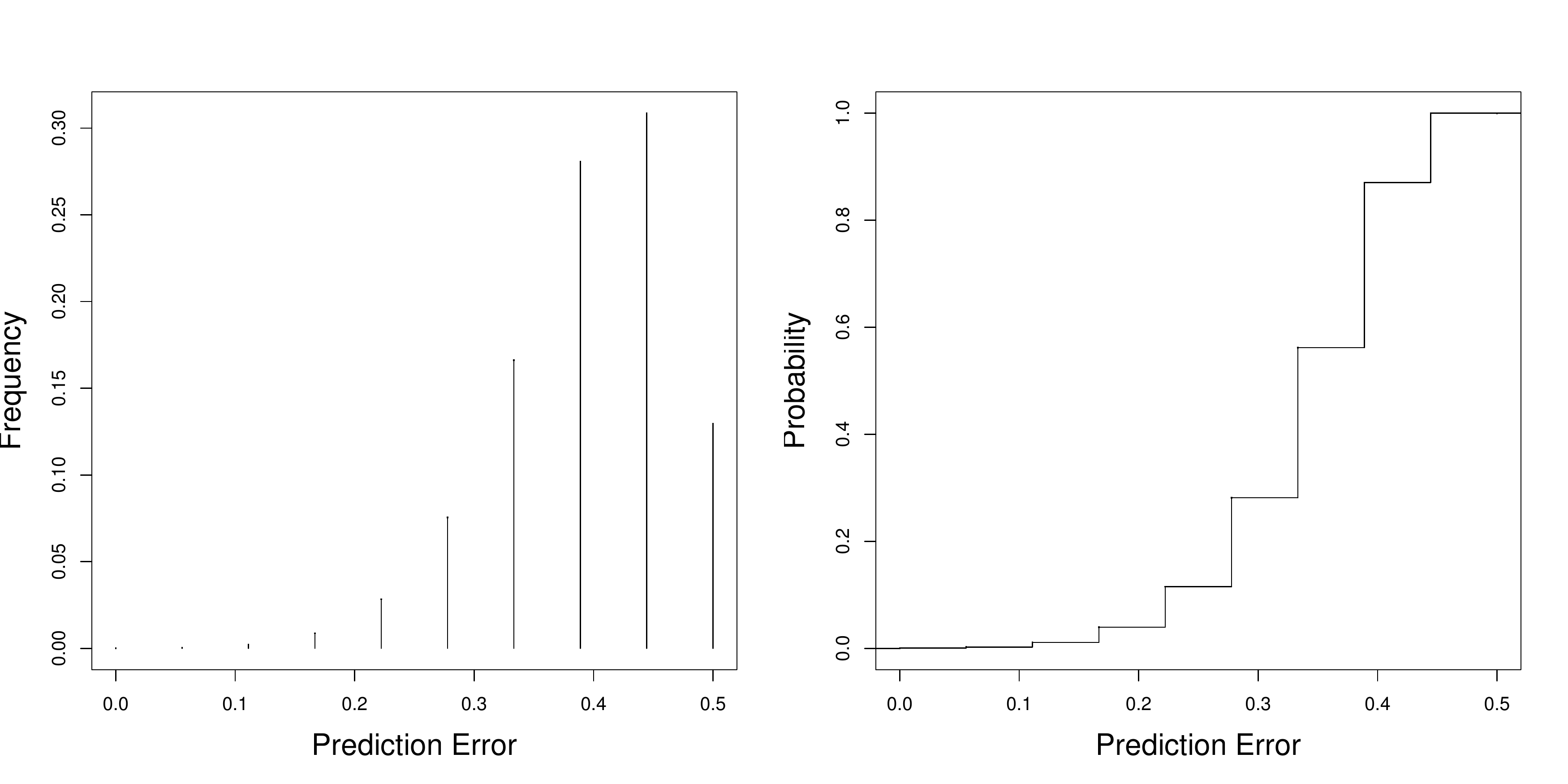}}
\caption{\label{fig:ND}The null distribution (on the left) and the cumulative null distribution (on the right) of $\widehat{ETC}$ for the following operating conditions and sample sizes $n_0 = 9$, $n_1 = 9$, $c_0 = 1$, $c_1 = 2$, $\pi_1=0.5$.} 
\end{figure}

\section{The Algorithm}\label{sec:algorithm}
	
The algorithm described in this section has been devised to count the number of permutations in any set $S_{fn,fp}$. It is based on three main properties, which shall be stated here without a formal proof. Firstly, the sets defined in \eqref{eq:set} can further be split into sets of permutations with the positive domain on the left and on the right side of the threshold,

\begin{equation}\label{eq:subpartition} S_{fn,fp} = S_{fn,fp}^{\leftarrow} \cup S_{fn,fp}^{\rightarrow}.\end{equation}

Secondly, for every permutation $p \in \mathcal{P}_{n_1, n_0} $ the position of the threshold is unambiguous by convention and divides the permutation into a positive and a negative domain with $tp+fp$ and $tn+fn$ instances, respectively. The number of favorable permutations can be counted separately for each domain, since every combination of a favorable permutation of the positive domain $p^+ \in \mathcal{P}_{tp, fp} $ and a favorable permutation of the negative domain $ p^- \in \mathcal{P}_{fn, tn}$ forms a valid permutation of the respective set $p = (p^+, p^-) \in S_{fn,fp}^{\leftarrow}$. Thus, the overall number of permutations is given by the product of the number of favorable permutations of the positive domain and the negative domain. 
	
\begin{align} \label{eq:split} | S_{fn,fp}^{\leftarrow} | & = | S_{fn,fp}^{+, \leftarrow} | \cdot | S_{fn,fp}^{-, \leftarrow} | \\
| S_{fn,fp}^{\rightarrow} | & = | S_{fn,fp}^{+, \rightarrow} | \cdot | S_{fn,fp}^{-, \rightarrow} |
\end{align}
	
Thirdly, for every set of permutations and for both domains, we can construct a starting permutation, which is characterized by the fact that the false negative and false positive instances are as close to the threshold as possible. Any shift of a false instance towards the threshold will render the position of the threshold suboptimal and the resulting permutation will not be element of the respective set. From this initial permutation, the false positive and false negative instances are shifted away from the threshold as long as the obtained permutation remains in the set and the stopping permutation is reached. An example of a starting and stopping permutation for $S_{(1, 2)}^{\leftarrow}$ is illustrated in Figure~\ref{fig:startPerm}.
	
\begin{figure}
\centering
\includegraphics{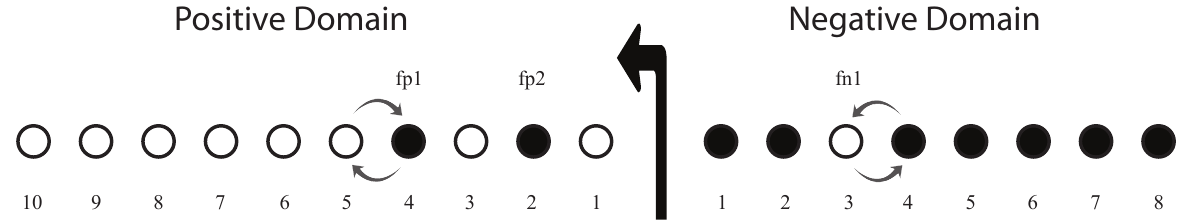}
\includegraphics{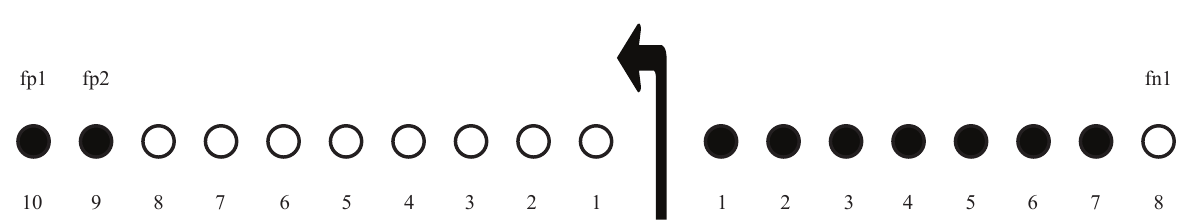}
\caption{\label{fig:startPerm}Illustration of the starting and stopping permutations for both the positive and the negative domain for $S_{(1,2)}$ and for the parameters $n_0 = 9$, $n_1 = 9$, $c_1 = 2$, $c_0=1$, and $\pi_1=0.5$. Black circles indicate negative instances, white circles positive instances. The bold arrow points in direction of the positive domain. The numbers indicate the position indices of the algorithm according to \eqref{eq:count}. Inserting these parameters into \eqref{eq:count_sub1} and \eqref{eq:count_sub2} delivers the starting and stopping indices for the positive domain of $S_{(1,2)}$. The two false positive instances can shift from position 4 to 10 and 2 to 9, respectively. According to \eqref{eq:count_sub1} and \eqref{eq:count_sub2} this yields $\sum_{i_1 = 4}^{10} \sum_{i_2 = 2}^{ \min \{ 9, i_1 - 1\} } 1 = 2+3+4+5+6+7+8 = 35$. As stated in \eqref{eq:count2_sub1} and \eqref{eq:count2_sub2} the false negative instance can shift from position 3 to position 8 in the negative domain, yielding $\sum_3^8 1 = 6$ permutations. According to \eqref{eq:split} the overall number of permutations is, thus, $6 \cdot 35 = 210$.}
\end{figure}

In order to properly describe this algorithm we need to introduce two conventions. The first convention concerns the indexation of the positions of a permutation. As illustrated in Figure~\ref{fig:startPerm}, the indices are in increasing order starting at the threshold. The second convention concerns the indexation of the false positives or false negatives. The first false instance will always have the highest position number, the second false instance the second highest and so on. These two conventions allow us to express the number of favorable permutations of the positive domain left of the threshold by the following formula
	
\begin{align}\label{eq:count} 
| S_{fn,fp}^{+, \leftarrow} | & = \sum_{i_{1} = start_{1} }^{stop_{1}} \sum_{i_{2} = start_{2}}^{\min \left \{ stop_2, i_1 - 1 \right \} } \cdots \sum_{i_{fp}=start_{fp}}^{ \min \left \{ stop_{fp}, i_{fp-1} - 1 \right \} } 1,  \textrm{ where }\\
\label{eq:count_sub1} start_k & = \min \left \{ tp + fp - (k-1) , l_1 + (fp - k) + 1\right \}, \\ 
l_1 & = \min \left \{ n \in \mathbb{N} : n c_1\pi_1 > ( fp - k +1 ) c_0  \pi_0\right \}, \nonumber \\
\label{eq:count_sub2} stop_k & = \min \{ tp + fp - (k-1) , start_{k} + l_2 \}, \\
l_2 & =  \max \left \{ n \in \mathbb{N} : fp~c_0 \pi_0 + fn~c_1 \pi_1 \leq ( tn + fp - k) c_0 \pi_0 + \left ( tp - l_1 - n \right ) c_1 \pi_1  \right \}. \nonumber 
\end{align}
	
The parameter $start_k$ in \eqref{eq:count_sub1} denotes the starting position of the $k$-th false positive. For the threshold to be optimal there must be at least $\min \{n \in \mathbb{N} : n c_1 \pi_1 > k c_0 \pi_0 \}$ positives to the right plus another $fp - k$ negatives. The index of the $k$-th false positive is, however, limited to $tp - fp - (k-1)$, which is the case when no positive instances are to the left. 
	
The parameter $stop_k$ in \eqref{eq:count_sub2} denotes the final stopping position of the $k$-th false positive. It equals the starting index plus $l_2$, which denotes the maximal number of shifts the positive instance can undertake before the threshold becomes suboptimal. In this case the positive domain would switch to the other side and its position would switch to $start + l_2 + 1$. However, the stopping index can never exceed $tp + fp - (k-1)$ because at this position there are no more positives to the left of the $k$-th false positive. 

Equivalent recursive expressions for $| S_{fn,fp}^{+, \rightarrow} |$, $| S_{fn,fp}^{-, \leftarrow}|$, and $| S_{fn,fp}^{-, \rightarrow}|$ can be found in \eqref{eq:count2}, \eqref{eq:count3}, and \eqref{eq:count4} in Appendix~\ref{sec:schema}. 
	
\section{Implementation}\label{sec:implementation}

Since the starting and stopping indices of the sum of index $i_k$ depend on the current value of the index $i_{k-1}$ and the number of sums depends on the number $fp$ and $fn$, equation \eqref{eq:count} is implemented as a recursive scheme. The initial recursion level equals the number of false positives or false negatives. The recursive function calls itself, recalculating the start and stop indices for the lower level until the base case is reached. In this case, the algorithm simply returns $stop_{i_{fp}} - start_{i_{fp}}$.

The described method has been implemented for the R software platform \citep{R} and is part of the \texttt{CVOC} package \citep{CVOC} that has been published unter the GNU general public license. The challenges of implementing it were twofold. First, since it is recursive in nature, the number of function calls grows exponentially as the sample size increases and so does the required computation time, which is $O(2^n)$, where n denotes the sample size. Especially at lower recursion levels the function will be called repeatedly with the exact same arguments. This can be avoided by means of memoization or dynamic programming, programming techniques which store the input arguments and the output in a suitable data object and use the cached results instead of calling the function again. Furthermore, the recursive functions were implemented in C++ using the \texttt{Rcpp} package, see \citet{Rcpp}. The modified algorithm is of the order $ O(1.05^n)$. The performance gains resulting from these modifications are illustrated in Figure~\ref{fig:benchmark}.

\begin{figure}
	\centering
	\includegraphics[width=0.6\textwidth]{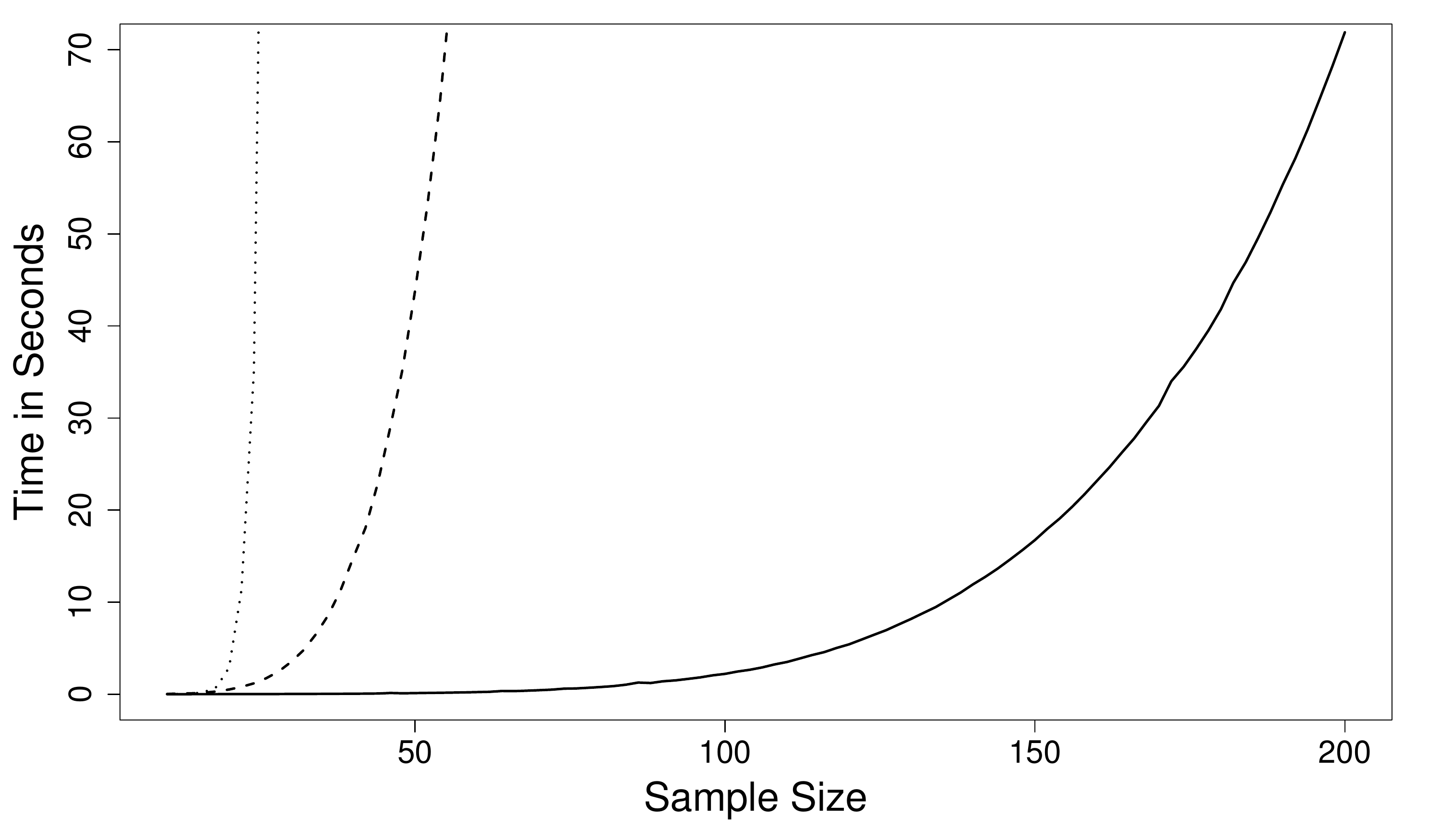}
	\caption{\label{fig:benchmark} This figure illustrates the time necessary for computing the ND on a standard PC with 4 kernels in seconds as a function of the sample size. The dotted line represents the standard algorithm implemented in R. The dashed line introduces memoization to the algorithm within the R platform. The solid line, which clearly, outperforms the others, represents the algorithm implemented in C++ including memoization.}
\end{figure}

Secondly, as the sample size increases, the number of possible permutations quickly exceeds the maximum integer value. This happens when the sample contains more than 30 instances. Thus, high precision numbers are needed to run the algorithm and save its results. The GNU multiple precision arithmetic library GMP \citep{GMP} and the GNU MPFR \citep{MPFR1} library were used. 
	
Another challenge concerns the existence of ties in the data. Theoretically, if the distribution of $X$ is continuous, there will almost surely be no tied order statistics. However, in real data applications, there might be instances with the same value and different class labels. The implemented algorithm proceeds by calculation both test statistics and returning the more conservative value.

\section{Simulation Studies}\label{sec:SimulationStudies}
	
This section intends to shed light on the capacity of ETC and comparable classifiers to select signal variables out of a large number of noise variables. A signal variable is one that exhibits differing CCDs. In the following simulation studies, 1000 signal variables and 99000 noise variables are repeatedly generated and the percentage of signal variables among the first 1000 ranked variables - this percentage will be referred to as the filtering performance (FP) - was observed. 
	
The methods under scrutiny are ETC, the linear discriminant analysis (LDA) and the quadratic discriminant analysis (QDA). Univariate LDA is a parametric threshold classifier which assumes two Gaussian CCDs with equal variances, estimates the parameters and then derives the optimal threshold given these parameters and the operating conditions. If the assumptions are met, the LDA is the Bayes classifier and will constitute the most efficient variable filter. In simulation study A, we will, thus, analyse the loss in efficiency of ETC and QDA as compared to the benchmark. If the assumption of equal variances is relaxed, as done in simulation study B, the univariate QDA becomes the Bayes classifier, which is, however, no threshold classifier. Under these assumptions the optimal positive and negative domain is an interval and its respective complement. In this setting we can analyse to what extent the variances of the CCDS need to differ to manifest in QDA becoming the most efficient filter. While simulation study C considers data that are contaminated with varying degrees of outliers simulation study D considers data that stem from non-symmetric or skewed distributions. Thus, the overall purpose of these sections is to compare the robustness of ETC to its parametric counterpart LDA and QDA. 
	
\subsection{Simulation Study A}
	
In this experiment, characterized in \eqref{eq:SimA}, the signal variables with a sample size of $n \in [10, 200]$, where $n_1 = n_0$, are drawn from two Gaussian distributions $ N(\mu_1, 1)$ and $ N(\mu_0, 1)$, which differ only with respect to their central location, $\Delta \mu = \mu_1 - \mu_0 \in [0, 2.5]$.
	
\begin{equation}\label{eq:SimA} \begin{array}{rlll}
X_{ij} & \sim & \mathcal{N}(\mu_1, 1) \qquad & i = 1, \ldots, n_1; j = 1, \ldots, 1000 \\
X_{ij} & \sim & \mathcal{N}(\mu_0, 1) \qquad & i = n_1 + 1, \ldots, n; j = 1, \ldots, 1000 \\
X_{ij} & \sim & \mathcal{N}(0,1) \qquad & i = 1, \ldots, n; j = 1001, \ldots, 100000
\end{array} \end{equation}
	
Under these assumptions, the LDA is the Bayes classifier and will serve as the benchmark to evaluate the loss in filtering performance of ETC and QDA. The leftmost image in Figure~\ref{fig:SSA} shows the filtering performance of LDA. The two images to the right illustrate the loss in FP of ETC and QDA, respectively. The two latter show a very similar pattern. The loss in performance increases steadily as $\Delta \mu$ increases to about 40 \% and then converges to 0 with increasing $\Delta \mu$ depending on the sample size.

\begin{figure}
\centering
\includegraphics[width=1\textwidth]{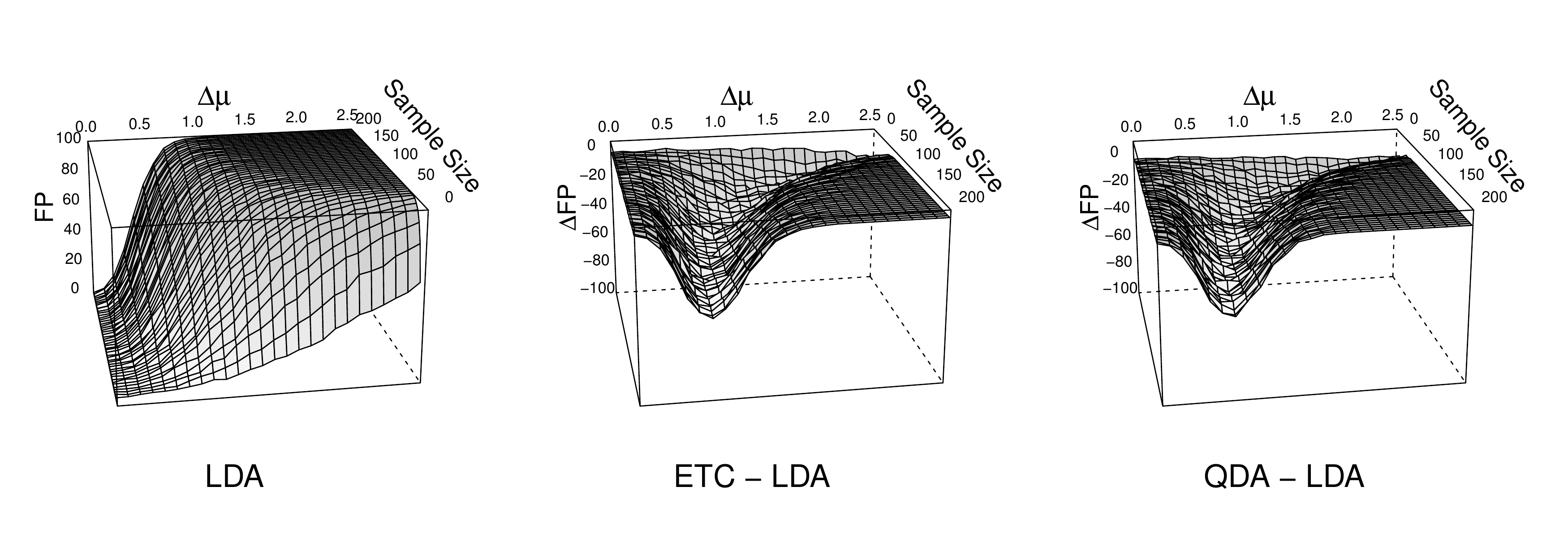}
\caption{\label{fig:SSA}Simulation study A analyses the filtering performance of the three classifiers under scrutiny for two Gaussian class conditional distributions with equal variances. The leftmost image illustrates the FP of LDA, the benchmark for this simulation setting. The two other images depict the difference in the filtering performance ($\Delta$FP) of ETC and QDA to the benchmark, respectively.}
\end{figure}

\subsection{Simulation Study B}
	
In this experiment, characterized in \eqref{eq:SimB}, the sample size is fixed to $n=100$ and the assumption of equal variances was relaxed, $\sigma_1 \in 2^{[-3, 3]}$ and $\sigma_0 = 1/\sigma_1$. As in simulation study A the difference in the central location differ, $\Delta \mu = \mu_1 - \mu_0 \in [0, 2.5]$. The results can be seen in Figure~\ref{fig:SSB}.
	
\begin{equation}\label{eq:SimB} \begin{array}{rll}
X_{ij} \sim & \mathcal{N}(\mu_1, \sigma_1^2) \qquad & i = 1, \ldots, 50; j = 1, \ldots, 1000 \\
X_{ij} \sim & \mathcal{N}(\mu_0, \sigma_0^2) \qquad & i = 51, \ldots, 100; j = 1, \ldots, 1000 \\
X_{ij} \sim & \mathcal{N}(0,1) \qquad & i = 1, \ldots, 100; j = 1001, \ldots, 100000 \\
\end{array} \end{equation}
	
Simulation study B indicates that there are systematic differences in the performance of the classifiers when the assumption of equal variances is not met. Intuitively, the stronger the CCDs differ from another, the easier it should be to predict the class based on the observation of $X$. This characteristic can be observed for QDA, which is the Bayes classifier under these assumptions. The greater the difference between $\sigma_1$ and $\sigma_0$ the better the filtering performance of QDA. Under these assumptions the Bayes classifier is not a member of the family of threshold classifiers, since the optimal positive domain is an interval or its complement. It is, thus, interesting to observe at what ratio of $\frac{\sigma_1}{\sigma_0}$ the false assumption of LDA and ETC manifest in the filtering performance. While the filtering performance of LDA clearly deteriorates when this central assumption is not met, ETC can benefit from differences in the variances of the CCDs if the positive class exhibits the relatively lower variance and $c_1 > c_0$. This becomes clear, if one considers the extreme case of $\sigma_1 = 0$, where the positive CCD degenerates to a single point. Even if $\Delta \mu = 0 $, one can obtain a false negative rate of 0 and a false positive rate of 0.5 by setting a threshold just below this point.

\begin{figure}
\centering
\includegraphics[width=1\textwidth]{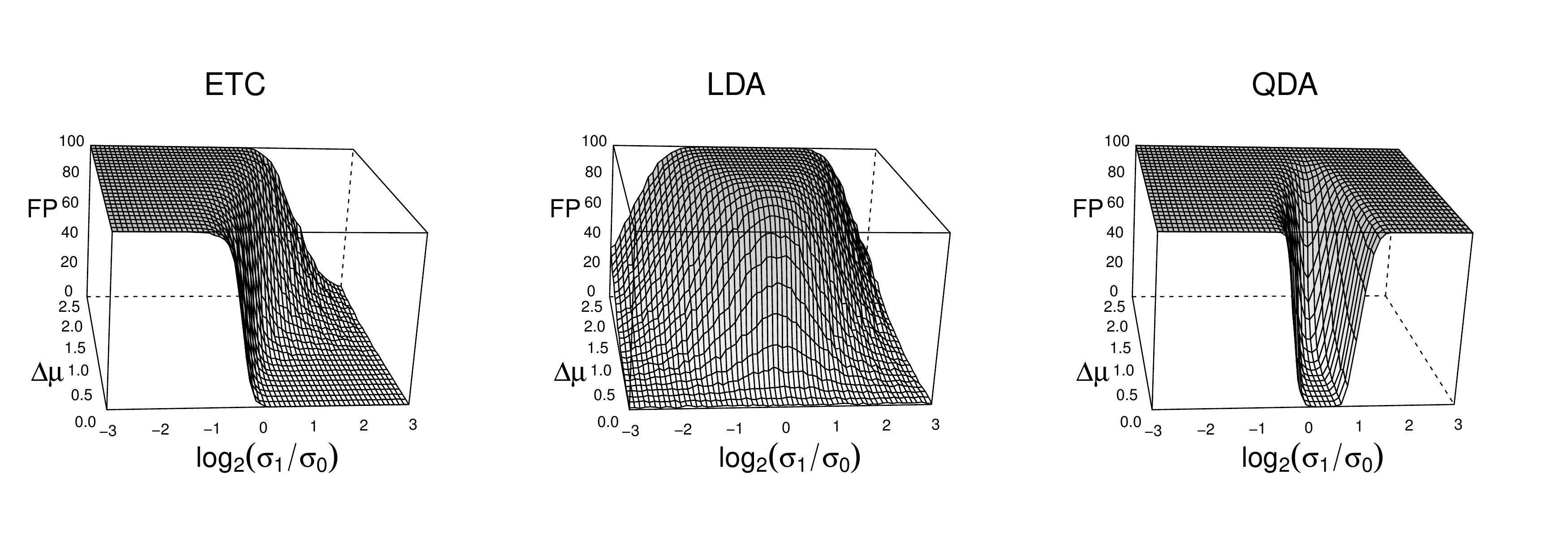}
\caption{\label{fig:SSB} Simulation study B analyses the filtering performance of the three classifiers for two Gaussian class conditional distributions for varying means and variances.} 
\end{figure}

\subsection{Simulation Study C}
	
This experiment, characterized in \eqref{eq:SimC}, considers data that are contaminated with a varying proportion of outliers, $\phi \in [0, 0.3]$. Furthermore, we shall vary the difference in the central location, $\Delta \mu = \mu_1 - \mu_0 \in [0, 2.5]$. Again the number of samples are fixed at $n = 100$, with an equal number of positives and negatives. In order to blend out the differences in the filtering performance caused by other effects we shall compare the FP with the FP of the data without outliers. Thus, let us define $\Delta$FP$(\phi):=$FP$(\phi) - $FP$(0)$, where FP$(\phi)$ denotes the filtering performance with a proportion of $\phi$ outlying values, ceteris paribus. The results are depicted in Figure~\ref{fig:SSC}.  
	
\begin{equation} \label{eq:SimC} \begin{array}{rlll}
X_{ij} & \sim & \mathcal{N}(\mu_1, 1) & i = 1, \ldots, 50 - \left \lfloor \phi \cdot 50 \right \rfloor; j = 1, \ldots, 1000 \\
X_{ij} & \sim & \mathcal{N}(\mu_1, 5) & i = \left \lfloor \phi \cdot 50 \right \rfloor + 1, \ldots, 50; j = 1, \ldots, 1000 \\
X_{ij} & \sim & \mathcal{N}(\mu_0, 1) & i = 51, \ldots, 100 - \left \lfloor \phi \cdot 50 \right \rfloor; j = 1, \ldots, 1000 \\
X_{ij} & \sim & \mathcal{N}(\mu_0, 5) & i = 100 - \left \lfloor \phi \cdot 50 \right \rfloor + 1, \ldots, 100; j = 1, \ldots, 1000 \\
X_{ij} & \sim & \mathcal{N}(0,1) & i = 1, \ldots, 100 - \left \lfloor \phi \cdot 100 \right \rfloor; j = 1001, \ldots, 100000 \\
X_{ij} & \sim & \mathcal{N}(0,5) & i = 100 - \left \lfloor \phi \cdot 100 \right \rfloor + 1, \ldots, 100; j = 1001, \ldots, 100000 \\
\end{array} \end{equation}
	
Simulation Setting C sheds light on the robustness of the classifiers in the presence of outliers. The filtering performance deteriorates in all three methods, however ETC proves to be the most robust. 
	
\begin{figure}
\centering
\includegraphics[width=1\textwidth]{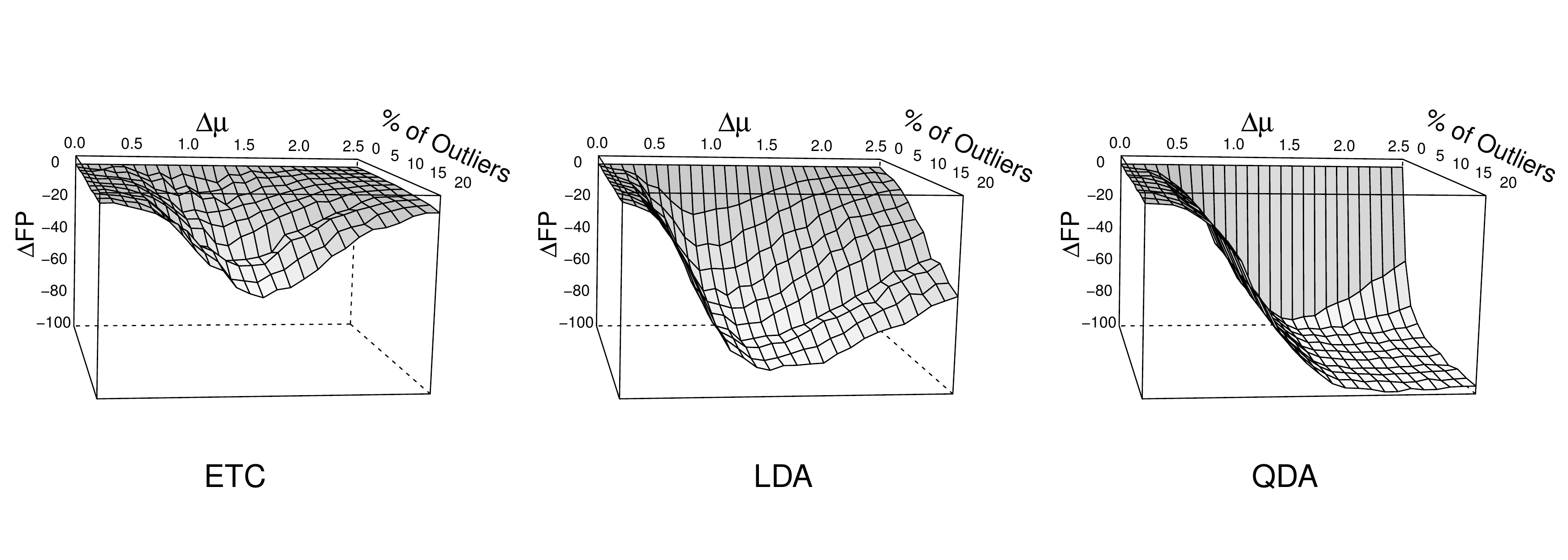}
\caption{\label{fig:SSC}Simulation study C analyses the filtering performance of ETC, LDA, and QDA for two Gaussian class conditional distributions that are contaminated with a varying degree of outliers. $\Delta$FP denotes the difference in performance as compared to the simulation study without outliers all other parameters being equal.} 
\end{figure}

\subsection{Simulation Study D}
	
In this simulation study, characterized in \eqref{eq:Sim3}, we would like to study the effect of skewness on the filtering performance. The random variables are, thus, drawn from a log-normal distribution with the parameters $\mu=0$ and $ \sigma \in \sqrt{2^{[-3, 3]}}$. In order to introduce a difference in the central location, the parameter $ \Delta \in [0, 2.5] $ controls the shift in the central location of the distribution. Again, the number of samples is fixed at $n=100$, with $n_1=n_0=n/2$. The results are depicted in Figure~\ref{fig:SSD}.
	
\begin{equation}\label{eq:Sim3} \begin{array}{rlll}
\ln (X_{ij}) & \sim & \mathcal{N}(0, \sigma^2) \qquad & i = 1, \ldots, n_1; j = 1, \ldots, 1000 \\
\ln (X_{ij}) & \sim & \mathcal{N}(0, \sigma^2) - \Delta \qquad & i = n_1 + 1, \ldots, n; j = 1, \ldots, 1000 \\
\ln (X_{ij}) & \sim & \mathcal{N}(0,\sigma^2) \qquad & i = 1, \ldots, n; j = 1001, \ldots, 100000 \\
\end{array} \end{equation}
	
Simulation Setting D sheds light on the filtering performance of the three classifiers under scrutiny when the CCDs are skewed. While ETC is completely resilient against deviations from symmetry, LDA and QDA clearly show a deterioration in the filtering performance.
	
\begin{figure}
\centering
\includegraphics[width=1\textwidth]{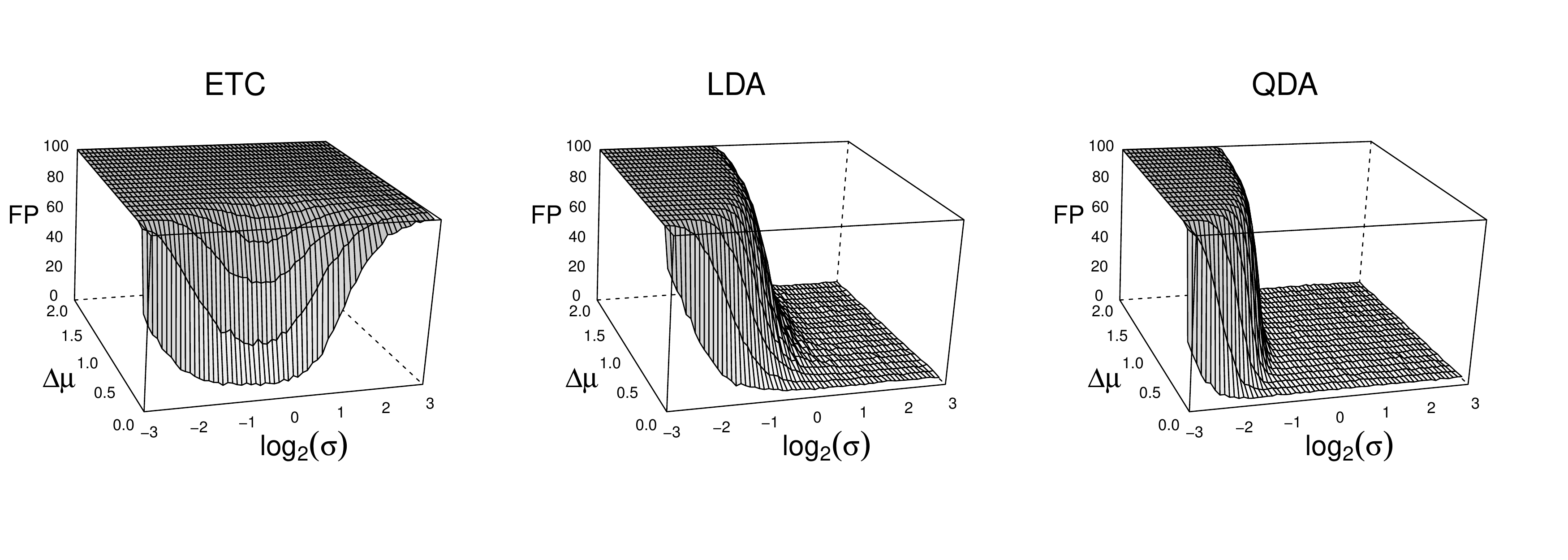}
\caption{\label{fig:SSD}Simulation study D analyses the filtering performance of the three classifiers for two lognormal class conditional distributions with varying degrees of skewness.} 
\end{figure}

\section{Conlusions}\label{sec:Conclusions}
This paper introduces a novel test for the separability of two classes by means of a simple threshold. Its test statistic is the prediction error of the nonparametric threshold classifier. We have derived an unbiased and consistent estimator for this test statistic and devised a fast recursive algorithm with which one can calculate its exact finite sample distribution under the null hypothesis of equal class conditional distributions. 

This test constitutes an exact nonparametric method which, thus, exhibits advantageous properties for the purpose of filter-type variable selection. First, we have proven that its ND is independent of the class conditionals and, thus, this test yields exact results for data following different kinds of distributions. Furthermore, simulation studies have shown that it exhibits only a small loss in efficiency compared to the Bayes classifier and that this loss is mostly outweighed by its robustness towards outlying values and skewed distributions. ETC will, thus, select important variables which may otherwise not be found by other filtering methods.
	
The derivation of its ND required the development of a novel type of algorithm. This approach is, however, not limited to the threshold classifier and can be extended to other methods as long as \eqref{eq:suffStat} preserves all the information and the distribution on $\mathcal{P}_{n_1, n_0}$ is uniform. The most obvious extension is to the family of interval-type classifiers $ \mathcal{F} = \big \{\delta_{(t_1, t_2]}(x) = \mathbbm{1}_{(t_1, t_2]}(x), t_1, t_2 \in \mathbb{R} \big \} \cup \big \{ \delta_{\mathbb{R} \setminus (t_1, t_2]}(x) = \mathbbm{1}_{\mathbb{R} \setminus (t_1, t_2] }, t_1, t_2 \in \mathbb{R} \big \}$. The Bayes classifier for two Gaussian class-conditionals with different variances is of this type. Furthermore, the idea seems to allow an application to multivariate nonparametric classifiers where the p-value could serve as an information criteria for model selection. This will be the subject of investigation in the foreseeable future.
	
\appendix

\section{Defining the Partition of $\mathcal{P}_{n_1, n_0}$}\label{sec:factorize}

\begin{equation*} \phi(p) := \left \{ \begin{matrix}
(\sum\limits_{j=i}^{n} p_j, \sum\limits_{j=1}^{i-1} 1-p_j), i = \min \argmin\limits_{j=1, \ldots, n} E_{<j}(p) & \textrm{ if } \min\limits_{j=1, \ldots, n} E_{<j}(p) \leq \min\limits_{j=1, \ldots, n} E_{\geq j}(p) \\
(\sum\limits_{j=1}^{i-1} p_j, \sum\limits_{j=i}^{n} 1-p_j), i = \min \argmin\limits_{j=1, \ldots, n} E_{\geq j}(p) & \textrm{ else }
\end{matrix} \right.
\end{equation*}

\[ E_{<i}(p) := c_0 \pi_0 \frac{1}{n_0} \sum_{j=1}^{i-1} 1 - p_j + c_1 \pi_1 \frac{1}{n_1} \sum_{j=i}^{n} p_j \]

\[ E_{\geq i}(p) := c_0 \pi_0 \frac{1}{n_0} \sum_{j=i}^{n} 1 - p_j + c_1 \pi_1 \frac{1}{n_1} \sum_{j=1}^{i-1} p_j \]

\section{The Recursive Counting Schemes}\label{sec:schema}
	
\begin{align}\label{eq:count2}
|S_{fn,fp}^{-, \leftarrow} | & = \sum_{j_{1}=start_{1}}^{ stop_1 } \sum_{j_{2} = start_{2}}^{ \min \{ stop_2, j_{1} - 1\} } \cdots \sum_{j_{fn}=start_{fn}}^{ \min \{ stop_{fn}, j_{fn - 1} - 1 \}  } 1,  \textrm{ where }\\
\label{eq:count2_sub1} start_k & = \min \left \{ tn + fn - (k-1) , l_1 + (fn - k) + 1\right \}, \\
l_1 & = \left \{ n \in \mathbb{N}: n c_0 \pi_0 \geq ( fn - k +1 ) c_1 \pi_1 \right \}, \nonumber \\
\label{eq:count2_sub2} stop_k & = \min \{ tn + fn - (k-1) , start_{k} + l_2 \},  \\
l_2 & = \max \{ n \in \mathbb{N} : fp~c_0 \pi_0 + fn~c_1 \pi_1 < \left ( tn - l_1 - n \right ) c_0 \pi_0 + ( tp + fn - k) c_1 \pi_1 \}. \nonumber
\end{align}

\begin{align}\label{eq:count3} 
| S_{fn,fp}^{+, \rightarrow} | & = \sum_{i_{1} = start_{1} }^{stop_{1}} \sum_{i_{2} = start_{2}}^{\min \{ stop_2, i_1 - 1 \} } \cdots \sum_{i_{fp}=start_{fp}}^{ \min \{ stop_{fp}, i_{fp-1} - 1 \} } 1,  \textrm{ where }\\
start_k & = \min \left \{ tp + fp - (k-1) , l_1 + (fp - k) + 1\right \}, \nonumber \\ 
l_1 & = \min \left \{n \in \mathbb{N}: n c_1 \pi_1 \geq (fp - k +1 ) c_0 \pi_0 \right \}, \nonumber \\
stop_k & = \min \left \{ tp + fp - (k-1) , start_{k} + l_2 \right \},  \nonumber \\
l_2 & = \max \left \{n \in \mathbb{N} : fp~c_0 \pi_0 + fn~c_1 \pi_1 < \left ( tp - l_1 - n \right ) c_1 \pi_1 + ( tn + fp - k) c_0 \pi_0 \right \}. \nonumber
\end{align}

\begin{align}\label{eq:count4}
|S_{fn,fp}^{-, \rightarrow} | & = \sum_{j_{1}=start_{1}}^{ stop_1 } \sum_{j_{2} = start_{2}}^{ \min \{ stop_2, j_{1} - 1\} } \cdots \sum_{j_{fn}=start_{fn}}^{ \min \{ stop_{fn}, j_{fn - 1} - 1 \}  } 1, \textrm{ where }\\
start_k & = \min \left \{ tn + fn - (k-1) , l_1 + (fn - k) + 1\right \}, \nonumber\\ 
l_1 & =  \min \left \{n \in \mathbb{N}: n c_0 \pi_0 > ( fn - k +1 ) c_1 \pi_1 \right \}, \nonumber \\
stop_k & = \min \{ tn + fn - (k-1) , start_{k} + l_2 \},  \nonumber\\
l_2 & = \max \{n \in \mathbb{N} : fp~c_0 \pi_0 + fn~c_1 \pi_1 \leq \left ( tn - l_1 - n \right ) c_0 \pi_0 + ( tp + fn - k) c_1 \pi_1 \}. \nonumber
\end{align}

\bibliographystyle{plainnat}
\bibliography{refs}

\end{document}